\documentclass[sn-mathphys-num]{sn-jnl}

\usepackage{graphicx}%
\usepackage{multirow}%
\usepackage{amsmath,amssymb,amsfonts}%
\usepackage{amsthm}%
\usepackage{mathrsfs}%
\usepackage[title]{appendix}%
\usepackage{xcolor}%
\usepackage{textcomp}%
\usepackage{manyfoot}%
\usepackage{booktabs}%
\usepackage{algorithmic}
\usepackage[linesnumbered,ruled,vlined]{algorithm2e} 

\theoremstyle{thmstyleone}%
\newtheorem{theorem}{Theorem}[section]
\newtheorem{lemma}[theorem]{Lemma}
\newtheorem{corollary}[theorem]{Corollary}
\newtheorem{claim}[theorem]{Claim}

\theoremstyle{thmstyletwo}%
\newtheorem{remark}{Remark}%

\theoremstyle{thmstylethree}%

\begin{document}

\title[Solving No-wait Scheduling for TSN with Daisy-chain Topology]{Solving No-wait Scheduling for Time-Sensitive Networks with Daisy-Chain Topology}


\author[1]{\fnm{Qian} \sur{Li}}

\author[2]{\fnm{Henan} \sur{Liu}}

\author[4]{\fnm{Heng} \sur{Liu}}

\author[2,3]{\fnm{Yuyi} \sur{Wang}}

\affil[1]{\orgdiv{Shenzhen lnternational  Center For Industrial  And  Applied  Mathematics}, \orgname{Shenzhen Research Institute of Big Data}, \orgaddress{\city{Shenzhen}, \country{China}}}

\affil[2]{\orgname{CRRC, Zhuzhou Institute}, \orgaddress{\city{Zhuzhou}, \country{China}}}

\affil[3]{\orgname{Tengen Intelligence Institute}, \orgaddress{\city{Zhuzhou}, \country{China}}}

\affil[4]{\orgname{South China University of Technology}, \orgaddress{\city{Guangzhou}, \country{China}}}


\abstract{Time-Sensitive Networking (TSN) is a set of standards aiming to enable deterministic and predictable communication over Ethernet networks. However, as the standards of TSN do not specify how to schedule the data streams, the main open problem around TSN is how to compute schedules efficiently and effectively. In this paper, we solve this open problem for no-wait schedules on the daisy-chain topology, one of the most commonly used topologies. Precisely, we develop an efficient algorithm that optimally computes no-wait schedules for the daisy-chain topology, with a time complexity that scales polynomially in both the number of streams and the network size.

The basic idea is to recast the no-wait scheduling problem as a variant of a graph coloring problem where some restrictions are imposed on the colors available for every vertex, and where the underlying graph is an interval graph. Our main technical part is to show that this variant of graph coloring problem can be solved in polynomial time for interval graphs, though it is NP-hard for general graphs. Evaluations based on real-life TSN systems demonstrate its optimality and its ability to scale with up to tens of thousands of streams.}

\keywords{Time-sensitive networking, no-wait schedules, daisy-chain topology, polynomial-time algorithm}


\pacs[MSC Classification]{90B35, 90C27, 05C15}

\newcommand{\warn}[1]{{\color{red}{#1}}}
\newcommand{\supp}{\mathrm{supp}}
\newcommand{\sw}{\mathrm{SW}}
\newcommand{\edg}{\mathcal{E}}
\newcommand{\interval}{\mathcal{I}}
\newcommand{\clique}{c}
\renewcommand{\vec}[1]{\boldsymbol{#1}}
\newcommand{\subalgo}{\mathsf{Find}}
\newcommand{\kmax}{k^\ast}

\maketitle

\section{Introduction}
In recent years, the rapid advancement of communication technologies has revolutionized various industries, enabling the development of complex systems with demanding real-time requirements. From industrial automation, in-vehicle communication, and avionics to multimedia streaming and telecommunication networks, there is an increasing need for reliable and deterministic communication to ensure the timely delivery of critical data. In particular, one central characterization of Industry 4.0 paradigm is networked cyber-physical systems, where computers control physical processes. So a real-time communication network with deterministically bounded network delay and jitter is usually needed to guarantee the physical system under control. 

\emph{Time-Sensitive Networking} (TSN) has emerged as a set of standards to address these requirements and provide a unified solution for time-sensitive applications over Ethernet networks.
TSN represents a significant evolution of traditional Ethernet, which was primarily designed for best-effort communication without any guarantees on timing or determinism. TSN is a suite of standards developed by the Institute of Electrical and Electronics Engineers (IEEE) 802.1 working group, and its main objective is to enable deterministic and predictable communication over Ethernet networks. To achieve the objective, TSN introduces several features and enhancements to traditional Ethernet, including time synchronization, traffic shaping and scheduling, and stream reservation. 

More specifically, TSN incorporates precise clock synchronization protocols, such as IEEE 1588 \cite{ieee1588} or IEEE802.1AS \cite{ieee802}, to achieve tight synchronization (up to nanoseconds precision) of switches and end stations in a network, which ensures that all the devices are able to coordinate their actions and thus allows scheduling of streams across the switched network. End stations inject data streams into the switched network at precise predetermined time points based on a schedule. Then streams pass through switches and reach their respective destinations. The destination of a stream must be an end station. The routing paths of streams are a predefined part of the input and are fixed. A switch supports multiple priority queues per egress port. Each queue has a so-called gate, and whether the gate is opening or closed determines whether this queue can access the medium.  The opening and closing of the gates of an egress port are controlled by a so-called Gate Control List (GCL), which is also predetermined according to a schedule.

However, the standards of TSN do not configure the scheduling. In fact, it is the main open problem around TSN how to compute a schedule that coordinates the injection times for all streams and the GCLs of switches to ensure the requested real-time requirements of all time-triggered traffic (a.k.a. scheduled traffic) are met. It turns out that deciding whether there is a valid schedule for a given set of streams is a very difficult combinatorial optimization problem: it is NP-hard even when restricted to various special classes of instances \cite{survey}. The scheduling algorithms in the literature can be classified into exact approaches and heuristic approaches. The exact approaches express the scheduling problem as a satisfiability modulo theory (SMT) problem, an integer linear programming (ILP) problem, or a constraint programming (CP) problem, etc, and then invoke the corresponding solver to find the optimal solution. The exact approaches would compute an optimal schedule if one exists, but they are not scalable beyond very small problem instances. Besides the exact approaches, many heuristic algorithms have been developed to try to find reasonably good schedules within a short time, such as the algorithms based on Tabu search \cite{durr2016no} or simulated annealing \cite{Reusch2021DependabilityAwareRA}. However, in the common case, the heuristic approaches cannot deduce whether a problem instance is infeasible, nor are guaranteed to find a solution if one exists.  See e.g. \cite{survey} for an excellent survey of scheduling algorithms in TSN. In summary, the existing scheduling algorithms are either poorly scalable or suboptimal.

In contrast to previous works designing general-purposed scheduling algorithms but without theoretical guarantees, this paper takes a different approach: it aims to develop algorithms with theoretical guarantees for a special kind of scenarios, which however still covers a notable proportion of real-life scenarios. Precisely, we focus on \emph{no-wait} scheduling of time-triggered streams in TSN with \emph{daisy-chain} topology.

\begin{figure}[t]
	\centering
	\includegraphics[scale=0.55]{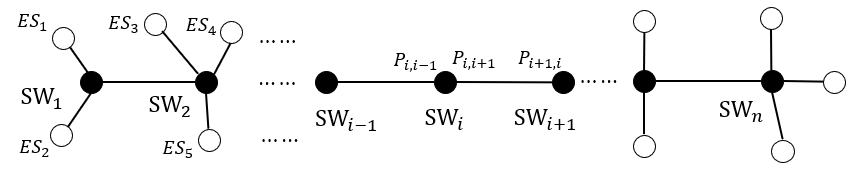}
	\caption{Illustration of the daisy-chain topology. Filled circles represent switches, and hollow ones represent end stations.}
	\label{fig:chain}
\end{figure}

The daisy-chain topology (a.k.a. the line topology, see Figure \ref{fig:chain} for an illustration) is one of the most commonly used topologies and also the simplest and cheapest topologies to implement, and can be found in several real-life TSN systems such as the ones equipped on trains.

 In a no-wait schedule, streams are not allowed to wait in switches, and a switch must forward a stream to the next node immediately upon receiving the stream. The rationale of the no-wait constraint is that  both the network delay and jitter of all 
streams are minimized simultaneously since no queuing delay occurs. No-wait schedules are particularly suitable for extremely high real-time requirements. Durr and Nayak \cite{durr2016no} presented a Tabu Search algorithm to compute no-wait schedules, where they model the no-wait scheduling problem as the well-known no-wait job-shop scheduling problem. Wang et al. \cite{Wang2022DeepRL} proposed a reinforcement learning method for no-wait scheduling, where they train machine learning models aiming to reduce the maximum arrival time among all frames. 
Both of the above two methods are heuristic and thus are suboptimal in the common case. Moreover, they are not scalable beyond medium-size problem instances: the method in \cite{durr2016no} needs about 3 hours to compute schedules for about 1500 streams, and the one in \cite{Wang2022DeepRL} needs about 400 seconds to compute schedules for 100 streams in a network with 9 switches and 10 end stations.
\begin{figure}[t]
	\centering
	\includegraphics[scale=0.6]{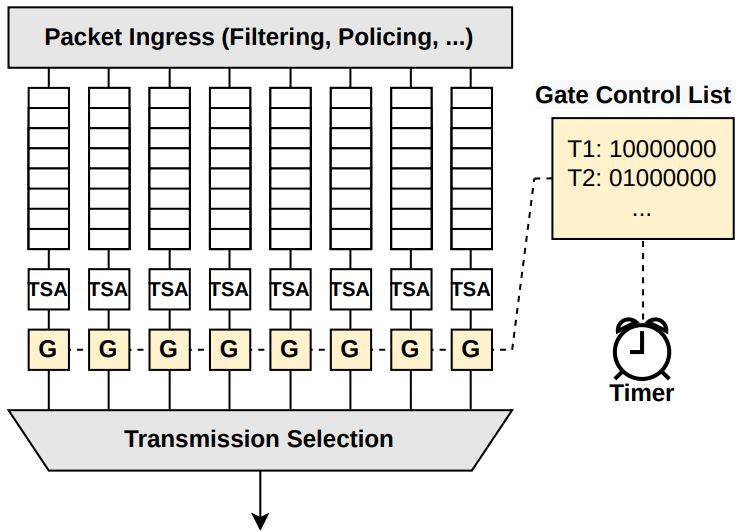}
	\caption{An Illustration of Gate Control List (GCL) from \cite{survey}}
	\label{fig:gcl}
\end{figure}

\vspace{2ex}
\noindent\textbf{Our Contribution.} In this paper, we propose an efficient algorithm (Algorithm \ref{alg:main}) that \emph{optimally} computes no-wait schedules on the daisy-chain topology (a.k.a. the line topology). Precisely, given a set of streams to be transmitted in TSN with a daisy-chain topology, Algorithm \ref{alg:main} returns a no-wait schedule if one exists or proves that no such no-wait schedules exist, in $O((n+|\mathcal{S}|)\cdot |\mathcal{S}|^{1.5}\log_2 p_{\mathcal{S}}+|\mathcal{S}|\cdot p_{\mathcal{S}}\log_2 p_{\mathcal{S}})$ time. Here, $|\mathcal{S}|$ is the number of streams, $n$ is the length of the daisy chain, and $p_{\mathcal{S}}$ is the least common multiple of the periods of streams in $\mathcal{S}$ (a.k.a. hyperperiod of $\mathcal{S}$). Evaluations based on a real-life TSN network (with 32 switches) show that Algorithm \ref{alg:main} can compute schedules for about 45,000 flows in only about 30 minutes. Moreover, if we only care about whether there exist no-wait schedules or not, then it can be decided in $O(|\mathcal{S}|\cdot n)$ time (Corollary \ref{cor:decide}). 

Our basic idea is that the no-wait scheduling problem can be recast as a variant of a graph coloring problem where some restrictions are imposed on the colors available for every vertex. Since the topology is the daisy chain, the underlying graph is an \emph{interval graph}. Our main technical part is to show that this variant of graph coloring problem can be solved in polynomial time for interval graphs, though it is NP-hard for general graphs.

\vspace{2ex}
\noindent\textbf{Organization.} The rest of this paper is organized as follows. We provide some details about TSN relevant to this paper in Section \ref{sec:model}, and then some notations and the mathematical model in Section \ref{sec:notation}. In Section \ref{sec:alg}, we present the algorithms and the analysis. We present evaluation results in Section \ref{sec:evaluation}, and conclude the paper in Section \ref{sec:conclusion}.

\section{System Model}\label{sec:model}
According to the IEEE 802.1Qbv standards, a Time-Sensitive Network consists of network elements (i.e., switches and end stations) and a Central Network Controller (CNC) for configuring and managing the network. In this paper, we focus on scheduling time-triggered traffic (a.k.a. scheduled traffic), which consists of periodic data streams with hard real-time requirements. A time-triggered stream is specified by its source device, destination device, routing path, period, and amount of data per period. These properties of all streams are a part of the input to scheduling algorithms. Besides, the source device and destination device of each stream must be end stations.

The transit of time-trigger streams in TSN is scheduled by specifying (i) the \emph{injection time} of each stream and (ii) the \emph{Gate Control List} (GCL) of each egress port of switches. Precisely, the schedule is executed periodically for an indefinite number of times. The injection time of a stream is the relative time to the start of the scheduled period to be injected into the network. Each egress port of a switch has multiple queues, and each queue has a so-called gate. When the gate is in the open state, data in the queue is considered for transmission. The opening and closing of the gates of an egress port are controlled by a GCL. A GCL entry consists of two parts: a time interval $[T_i,T_{i+1}]$ , and a bit string indicating which gates are open or closed in the time interval $[T_i,T_{i+1}]$. Typically, the GCLs of all egress ports are programmed with the same period as the schedule. See Figure \ref{fig:gcl} for an illustration of GCL. In addition, in contrast with egress ports, an ingress port is capable of receiving multiple streams simultaneously.

Scheduling would be simplified when we restrict our attention to no-wait scheduling: we only need to specify the injection time of each stream. Precisely, noting that streams cannot be scheduled to wait at closed gates, we can deploy any no-wait schedule by opening all gates all the time.

In this paper, we assume that for all streams and all switches, the time that the switch forwards the stream (i.e., from when the switch receives the first bit of the stream to when the next node receives the first bit) is the same and set to be 1 unit time. The forwarding time consists of transmission time and several types of delays, namely the propagation delay of signals along the link, the processing delay for deciding on which port to forward an incoming packet, and the queuing delay of a packet in the queue of an outgoing port (which is zero in no-wait schedules). The forwarding time interval of two streams by the same egress port could overlap slightly, but the transmission time interval must not since otherwise their electrical signals would interfere. Here, we impose that the forwarding time interval of two streams by the same egress port must not overlap. This imposed restriction will make things much simpler, with decreasing little capacity of TSN because all the delays are much smaller than the transmission time. Finally, we remark that the assumption of the same forwarding time is reasonable, since (a) the forwarding time is mainly determined by the packet size and the link bandwidth; (b) the packet sizes and the bandwidth of links respectively are usually similar in practical TSN; and (c) large packets can be split into multiple small unit-forwarding-time packets.

\begin{figure}[ht]
	\centering
	\includegraphics[scale=0.65]{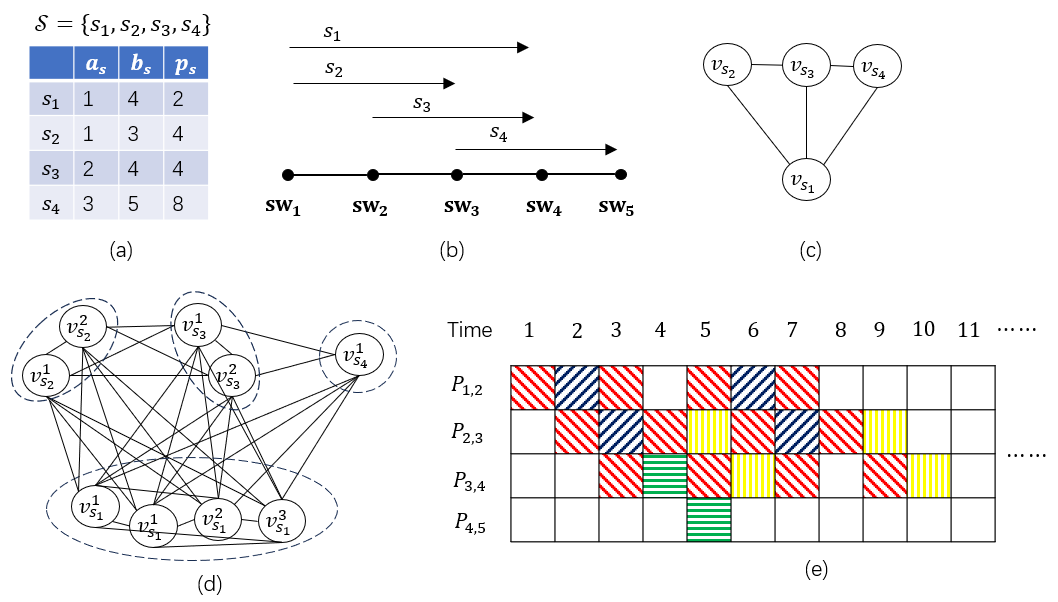}
	\includegraphics[scale=0.65]{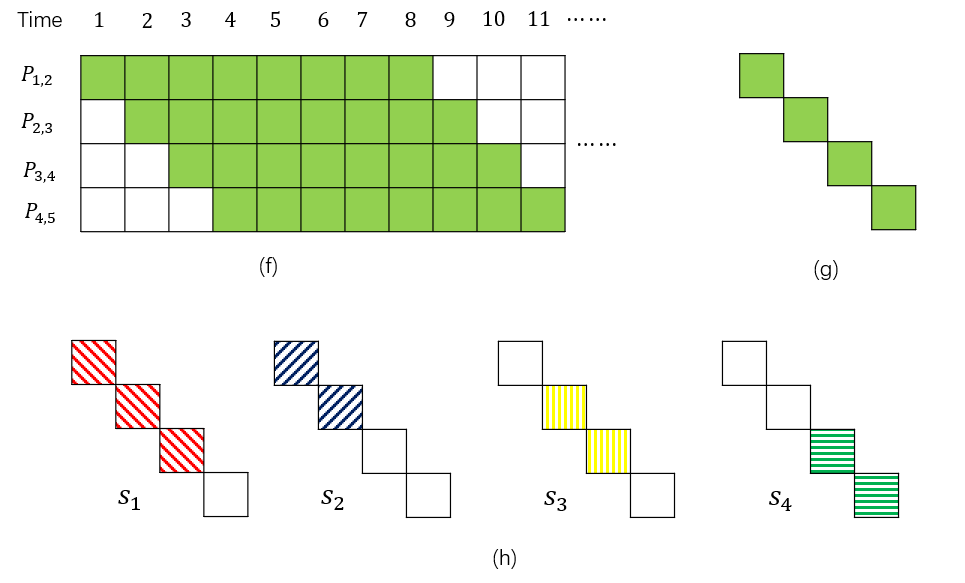}
	\caption{(a) a set of streams $\mathcal{S}$; (b) routing paths; (c) $G_{\mathcal{S}}$; (d) $G^\star_{\mathcal{S}}$; (e) a  Gantt chart describing a no-wait schedule; (f) a parallelogram (in green color); (g) a layer; (h) the Gantt chart description of each stream}
	\label{fig:ill}
\end{figure}

\section{Notations and Mathematical Modeling}\label{sec:notation}
In this section, we present the mathematical model of the no-wait scheduling problem for TSN with a daisy-chain topology and recast it to a variant of graph coloring problem with some restrictions on the colors available for every vertex (see Theorem \ref{thm:equiv}).

\subsection{Notations}
For two integers $a\leq b$, define $[a,b]:=\{a,a+1\cdots,b\}$. As shorthand, we write $[n]$ for $[1,n]=\{1,2,\cdots,n\}$. Figure \ref{fig:chain} illustrates the daisy-chain topology. In Figure \ref{fig:chain}, there are $n$ switches wired together in sequence, and named by $\sw_1,\sw_2,\cdots,\sw_n$ from left to right. We use $P_{i,i+1}$ to denote the right egress port of $\sw_i$, which sends data from $\sw_i$ to $\sw_{i+1}$, and use $P_{i,i-1}$ to denote the left egress port of $\sw_i$, which sends data from $\sw_i$ to $\sw_{i-1}$. Let $\edg[i]$ denote the set of end stations directly connected with $\sw_i$. For example, in Figure \ref{fig:chain}, $\edg[1]=\{ES_1,ES_2\}$ and $\edg[2]=\{ES_3,ES_4,ES_5\}$.  

We use $\mathcal{S}$ to denote the set of data streams to be scheduled. For a stream $s\in \mathcal{S}$, let $p_s\in\mathbb{N}$ denote the period of $s$ (i.e, the period of $s$ is $p_s$ units time), and define $a_s,b_s\in[n]$ such that $s$ is sent from an end station in $\edg[a_s]$ to another one in $\edg[b_s]$. We focus on streams where $a_s\neq b_s$. According to whether $a_s>b_s$ or not, we divide $\mathcal{S}$ into two groups:
\begin{itemize}
\item[(i)] one group consists of all streams $s\in\mathcal{S}$ that goes from right to left, i.e., $a_s>b_s$; and 
\item[(ii)] the other group consists of all streams $s\in\mathcal{S}$ that goes in the reverse direction, i.e., $a_s<b_s$. 
\end{itemize}
Transmission of these two groups involves different egress ports and thus can be scheduled separately. So, for convenience of presentation, we assume $a_s<b_s$ for any $s\in\mathcal{S}$. Note that $s$ will go through the egress ports $P_{a_s,a_s+1},P_{a_s+1,a_s+2},\cdots,P_{b_s-1,b_s}$ in order.

Let $p_{\mathcal{S}}$ denote the hyperperiod of $\mathcal{S}$, which is defined to be the least common multiple $\mathsf{LCM}(p_s:s\in\mathcal{S})$ of the periods of streams in $\mathcal{S}$. A no-wait schedule is executed periodically with the hyperperiod $p_{\mathcal{S}}$ as its period. A schedule for  $\mathcal{S}$ contains $p_{\mathcal{S}}/p_s$ replications of $s$, each having the hyperperiod as its period.

\subsection{Recasting the no-wait scheduling problem as a graph coloring problem} For each stream $s\in \mathcal{S}$, we associate it with an interval $\interval_s:=[a_s,b_s-1]$. Furthermore, we associate $\mathcal{S}$ with an undirected graph $G_{\mathcal{S}}$ which 
\begin{itemize}
\item[(i)] has a vertex $v_s$ for each stream $s\in\mathcal{S}$ and
\item[(ii)] has an edge $(v_s,v_{s'})$ if and only if $\interval_s\cap \interval_{s'}\neq \emptyset$ (i.e., the transmission of $s$ and $s'$ would use a common egress port).
\end{itemize}
A useful observation is that $G_{\mathcal{S}}$ is an \emph{interval graph}.
 In addition, we associate $\mathcal{S}$ with another undirected graph $G^\star_{\mathcal{S}}$ which 
\begin{itemize}
\item[(i)] has $p_{\mathcal{S}}/p_s$ vertices $v_s^1,\cdots,v_s^{p_{\mathcal{S}}/p_s}$ for each stream $s\in\mathcal{S}$ and 
\item[(ii)] has an edge $(v^i_s,v^j_{s'})$ if and only if $\interval_s\cap \interval_{s'}\neq \emptyset$.
\end{itemize}
In other words, $G_{\mathcal{S}}^\star$ is obtained from $G_{\mathcal{S}}$ by duplicating each $v_s$ $p_{\mathcal{S}}/p_s$ times. Note that $G^\star_{\mathcal{S}}$ is also an interval graph. See Figure \ref{fig:ill} (c)-(d) for an example of $G_{\mathcal{S}}$ and $G^\star_{\mathcal{S}}$.

A no-wait scheduling can be graphically described by a Gantt chart (see Figure \ref{fig:ill} (e) for an example). Specifically, in the Gantt chart, 
\begin{itemize}
\item The horizontal axis corresponds to time and is divided into slots of 1 unit time. Recall that 1 unit time is the amount of time that a switch forwards a stream.
\item The vertical axis corresponds to the $n-1$ egress ports $P_{1,2},P_{2,3},\cdots,P_{n-1,n}$.
\item A stream $s$ corresponds to $b_s-a_s$ horizontal bars, each lasting 1 unit time, stepping down from left to right (see Figure \ref{fig:ill} (h) for examples). The ``no-wait" restriction corresponds to that the finishing time of the previous bar is also the beginning time of the next bar.
\end{itemize}
The Gantt chart can be viewed as a $(n-1)\times \infty$ dimensional table. We divide the table into parallelograms in the fashion shown in Figure \ref{fig:ill} (f): each parallelogram consists of $p_\mathcal{S}$ layers, and each layer is defined to be a stepdown stair (Figure \ref{fig:ill} (g) draws a layer). A crucial observation is that the bars of a stream are all contained in only one layer. So the no-wait scheduling is equivalent to packing the given set of stepdown-shape bars (e.g. the ``tiles" in Figure \ref{fig:ill} (h)) into a parallelogram, under the constraint that there is a replication of $s$ between the $((i-1)\cdot p_s+1)$-th layer and the $(i\cdot p_s)$-th layer for all $s$ and $i\in[p_{\mathcal{S}}/p_s]$. 
For example, Figure \ref{fig:ill} (e) successfully packs the ``titles" in Figure \ref{fig:ill} (h) into a parallelogram, so a no-wait schedule exists. 

Inspired by the Gantt chart description, the no-wait scheduling can be recast in a variant of graph coloring problem where some restrictions are imposed on the colors available for every vertex. Precisely, recall that a \emph{proper} $q$-coloring of a graph $G=(V,E)$ is a function $C:V\rightarrow [q]$ such that no two adjacent vertices share the same color, i.e., $C(v)\neq C(v')$ for any $(v,v')\in E$. For a proper $p_{\mathcal{S}}$-coloring of $G_{\mathcal{S}}^\star$, we say it \emph{good} if  for each $s\in\mathcal{S}$ and each $i\in[p_{\mathcal{S}}/p_s]$, $C(v_s^i)\in [(i-1)\cdot p_s+1,i\cdot p_s]$. Suppose $G_{\mathcal{S}}^\star$ admits a good $p_{\mathcal{S}}$-coloring $C$. Then we can get a no-wait schedule by placing the $i$-th replication of $s$ in the $C(v_s^i)$-th layer of the parallelogram. On the other hand, suppose $\mathcal{S}$ admits a no-wait schedule, then we can obtain a good $p_{\mathcal{S}}$-coloring $C$ of $G_{\mathcal{S}}^\star$ by setting $C(v_s^i)$ to be the number of the layer containing the $i$-th replication of $s$. In summary, we have the following theorem.
\begin{theorem}\label{thm:equiv}
There exists a no-wait schedule for $\mathcal{S}$ if and only if there is a good $p_{\mathcal{S}}$-coloring of $G_{\mathcal{S}}^\star$. Furthermore, a no-wait schedule can be easily obtained from a good $p_{\mathcal{S}}$-coloring.
\end{theorem}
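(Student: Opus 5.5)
I will prove Theorem \ref{thm:equiv} by making the Gantt-chart picture precise with explicit coordinates, then checking the two directions. First I formalize a no-wait schedule. Since the switches forward in unit time slots and a stream never waits, the whole schedule is determined by the injection times. For the $i$-th replication of $s$, let $t_s^i\in\mathbb{Z}$ be the time at which it starts on port $P_{a_s,a_s+1}$. It then occupies port $P_{j,j+1}$ during slot $t_s^i+(j-a_s)$ for every $j\in\interval_s$.

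I introduce the layer index $\ell(s,i):=t_s^i-a_s \bmod p_{\mathcal{S}}$, taking representatives in $[p_{\mathcal{S}}]$. For every $j\in\interval_s$ the occupied slot on $P_{j,j+1}$ equals $\ell(s,i)+j$ modulo $p_{\mathcal{S}}$. This formalizes the observation that all bars of one replication lie in a single layer, namely the stair of slots $\{(j,\ell+j)\}$. The periodicity requirement, that the $i$-th replication is released in the $i$-th period window of $s$, becomes a constraint on $t_s^i$. Normalizing time so that the $i$-th release window of $s$ at its first port is $[(i-1)p_s+1,\,ip_s]$ shifted by $a_s$, this is exactly $\ell(s,i)\in[(i-1)p_s+1,ip_s]$. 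Pinning down this normalization consistently with the paper's parallelogram convention is the step I expect to need the most care. The time origin must be shifted per port by the offset $j$, so that the parallelogram's layers rather than vertical columns are the cyclic unit.

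For the forward direction, take a no-wait schedule and set $C(v_s^i):=\ell(s,i)$. Goodness is just the window constraint above. For properness, consider an edge $(v_s^i,v_{s'}^{i'})$. Then there is a common port $P_{j,j+1}$ with $j\in\interval_s\cap\interval_{s'}$. If the two colors were equal, both replications would use slot $\ell+j$ modulo $p_{\mathcal{S}}$ on that port, contradicting that forwarding intervals on one egress port do not overlap. This also covers distinct replications of the same stream, since $\interval_s\cap\interval_s\neq\emptyset$.

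For the converse, given a good coloring $C$, I define the injection time by $t_s^i:=C(v_s^i)+a_s$ and repeat the whole schedule with period $p_{\mathcal{S}}$. The windows are respected because $C$ is good. To check that there are no conflicts, suppose two replications collide on port $P_{j,j+1}$ in the same slot modulo $p_{\mathcal{S}}$. Then $j$ lies in both intervals, so the two vertices are adjacent in $G^\star_{\mathcal{S}}$. Moreover $C(v_s^i)+j\equiv C(v_{s'}^{i'})+j$, so the colors agree modulo $p_{\mathcal{S}}$. Both colors lie in $[p_{\mathcal{S}}]$, so they are equal, contradicting properness. Finally, computing $t_s^i$ from $C$ takes $O(1)$ per replication, which gives the "easily obtained" claim.
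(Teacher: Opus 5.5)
Your proposal is correct and is essentially the paper's argument. The color of $v_s^i$ is the index of the stair-shaped layer containing the $i$-th replication of $s$, and conversely the $i$-th replication is placed in layer $C(v_s^i)$; you make that layer explicit as $t_s^i-a_s \bmod p_{\mathcal{S}}$ and adopt the paper's layer-indexed window constraint as what respecting the periods means. Your injection time $C(v_s^i)+a_s$ differs from the paper's $C(v_s^i)+a_s-1$ only in the convention for numbering time slots.
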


\section{Algorithm Design}\label{sec:alg}
By Theorem \ref{thm:equiv}, the no-wait scheduling problem is equivalent to finding a good $p_{\mathcal{S}}$-coloring of $G_{\mathcal{S}}^\star$. In this section, we present an efficient algorithm to solve this variant of graph coloring problem, whose time complexity scales polynomially in both the number of streams and the network size.

\subsection{Baby case: when all streams have the same period}
We first consider the baby case where all streams have the same period, i.e., $p_{s}=p_{\mathcal{S}}$ for any $s\in\mathcal{S}$. In this case, 
$G_{\mathcal{S}}^\star$ reduces to $G_{\mathcal{S}}$, and every proper $p_s$-coloring of $G_{\mathcal{S}}$ is good. So the no-wait scheduling problem further reduces to finding a proper $p_s$-coloring of $G_{\mathcal{S}}$. Recall that $G_\mathcal{S}$ is an interval graph. While the $q$-coloring problem is a famous NP-complete problem for general graphs and general $q$, it can be solved by a greedy algorithm in linear time for interval graphs \cite{interval-graph-k-coloring}. So when all streams have the same period, the no-wait scheduling problem can be solved in linear time, i.e., $O(|\mathcal{S}|\log |\mathcal{S}|)$ time. 

\subsection{General case}
Now, we consider the general case, where streams have different periods. Here, we impose that each $p_s$ must be a power of two, i.e., $p_s=2^k$ for some non-negative integer $k$. On one hand, this imposed restriction does not lose much generality, since any $p_s\in\mathbb{N}$ can be approximated by some $2^k$ up to factor at most $\sqrt{2}\approx 1.414$. On the other hand, it seems that this imposed restriction makes the no-wait scheduling problem tractable: the no-wait scheduling problem can be solved efficiently under this imposed restriction (as we will see later), whereas we are not aware of and haven't come up with any algorithm computing no-wait schedules for general $p_s$ with a time complexity that scales polynomially in both the number of streams and the network size.

In the following, we present our scheduling algorithm. We first introduce some notations. Define $\kmax:=\max_{s\in\mathcal{S}}\log_2 p_s$. Note that $\kmax$ is a non-negative integer and the hyperperiod $p_{\mathcal{S}}=2^{\kmax}$. For $0\leq k\leq \kmax$, let 
\[
\mathcal{S}_{k}=\{s\in\mathcal{S}\mid p_s=2^k\}
\]
denote the subset of $\mathcal{S}$ consisting of all streams with period $2^k$. Define $\mathcal{S}_{\leq k}=\mathcal{S}_1\cup\cdots\cup \mathcal{S}_{k}$. Note that $\mathcal{S}=\mathcal{S}_1\cup \mathcal{S}_2\cup\cdots\cup \mathcal{S}_{\kmax}=\mathcal{S}_{\leq \kmax}$. 

Given a partition $\mathcal{S}_{\kmax}=\mathcal{S}_{\kmax}^a\cup \mathcal{S}_{\kmax}^b$ of $\mathcal{S}_{\kmax}$, where $\mathcal{S}_{\kmax}^a\cap \mathcal{S}_{\kmax}^b=\emptyset$, we define 
\begin{equation}\label{eq:sa}
\mathcal{S}^a:=\mathcal{S}_{\leq \kmax-1}\cup \widehat{\mathcal{S}_{\kmax}^a}
\end{equation}
and 
\begin{equation}\label{eq:sb}
\mathcal{S}^b:=\mathcal{S}_{\leq \kmax-1}\cup \widehat{\mathcal{S}_{\kmax}^b}.
\end{equation}
Here, $\widehat{\mathcal{S}_{\kmax}^a}$ (and $\widehat{\mathcal{S}_{\kmax}^b}$ respectively) is obtained from $\mathcal{S}_{\kmax}^a$ (and $\mathcal{S}_{\kmax}^b$ respectively) by changing the period of its streams from $2^{\kmax}$ to $2^{\kmax-1}$. For example, if $\mathcal{S}_{\kmax}^a$ contains a stream $s$ with $(a_s,b_s,p_s)=(1,4,2^{\kmax})$, then $\widehat{\mathcal{S}_{\kmax}^a}$ contains a stream $\hat{s}$ with $(a_{\hat{s}},b_{\hat{s}},p_{\hat{s}})=(1,4,2^{\kmax-1})$. Note that the hyperperiods of $\mathcal{S}^a$ and $\mathcal{S}^b$ are $2^{\kmax-1}$ rather than $2^{\kmax}$. 

In addition, we can define $G_{\mathcal{S}^a}^\star$ and $G_{\mathcal{S}^b}^\star$. Note that $G_{\mathcal{S}^a}^\star$ and $G_{\mathcal{S}^b}^\star$ have exactly one vertex $v_s^1$ for $s$ in $\widehat{\mathcal{S}_{\kmax}^a}$ or $\widehat{\mathcal{S}_{\kmax}^b}$. We have the following lemma.
\begin{lemma}\label{lem:characterization1}
$G_{\mathcal{S}}^\star$ admits a good $p_{\mathcal{S}}$-coloring if and only if there is a partition $\mathcal{S}_{\kmax}=\mathcal{S}_{\kmax}^a\cup \mathcal{S}_{\kmax}^b$ such that both $G_{\mathcal{S}^a}^\star$ and $G_{\mathcal{S}^b}^\star$ admit good $(p_{\mathcal{S}}/2)$-colorings. 
\end{lemma}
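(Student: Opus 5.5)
The plan is to split the color set $[p_{\mathcal{S}}]=[2^{\kmax}]$ into the two halves $A=[1,2^{\kmax-1}]$ and $B=[2^{\kmax-1}+1,2^{\kmax}]$, and to translate a good coloring of $G^\star_{\mathcal{S}}$ into a pair of good colorings, one on each half. Write $h:=p_{\mathcal{S}}/2=2^{\kmax-1}$. The key observation concerns a stream $s\in\mathcal{S}_{\leq \kmax-1}$ with $p_s=2^k\le h$. Its $2^{\kmax-k}$ copies $v_s^1,\dots,v_s^{2^{\kmax-k}}$ are forced into the consecutive blocks $[(i-1)p_s+1,ip_s]$. Since $p_s$ divides $h$, these blocks are aligned with the halves: the first $2^{\kmax-1-k}=h/p_s$ copies have colors in $A$, and the remaining copies have colors in $B$. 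Moreover, block $i$ restricted to the first half, and block $i+h/p_s$ shifted down by $h$, are exactly the $i$-th block of $s$ in the hyperperiod-$h$ instance. A stream in $\mathcal{S}_{\kmax}$ has a single copy $v_s^1$ whose color may be anywhere in $[2^{\kmax}]$.

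\emph{($\Rightarrow$)} Let $C$ be a good $p_{\mathcal{S}}$-coloring of $G^\star_{\mathcal{S}}$. Define $\mathcal{S}^a_{\kmax}=\{s\in\mathcal{S}_{\kmax}: C(v_s^1)\in A\}$ and $\mathcal{S}^b_{\kmax}$ as its complement. Color $G^\star_{\mathcal{S}^a}$ as follows. For $s\in\mathcal{S}_{\leq\kmax-1}$ and $i\le h/p_s$, set $C^a(v_s^i)=C(v_s^i)$; for $\hat s\in\widehat{\mathcal{S}^a_{\kmax}}$, set $C^a(v_{\hat s}^1)=C(v_s^1)$. Color $G^\star_{\mathcal{S}^b}$ analogously, using the copies $v_s^{i+h/p_s}$ and the colors $C(\cdot)-h$. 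Goodness follows from the block alignment noted above, and for $\hat s$ it holds because $\hat s$ has the single block $[1,h]$. Properness holds because adjacency in $G^\star_{\mathcal{S}^a}$ (resp.\ $G^\star_{\mathcal{S}^b}$) is determined by interval intersection, exactly as in $G^\star_{\mathcal{S}}$. The map sends distinct vertices to distinct vertices of $G^\star_{\mathcal{S}}$, and the shift by $h$ is injective, so no conflicts are created.

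\emph{($\Leftarrow$)} Given good $h$-colorings $C^a$ and $C^b$, define $C$ on $G^\star_{\mathcal{S}}$ by the inverse recipe. For $s\in\mathcal{S}_{\leq\kmax-1}$, set $C(v_s^i)=C^a(v_s^i)$ for $i\le h/p_s$ and $C(v_s^{i})=C^b(v_s^{i-h/p_s})+h$ otherwise. For $s\in\mathcal{S}^a_{\kmax}$ set $C(v_s^1)=C^a(v_{\hat s}^1)$, and for $s\in\mathcal{S}^b_{\kmax}$ set $C(v_s^1)=C^b(v_{\hat s}^1)+h$. Goodness is again the block alignment, and every $s\in\mathcal{S}_{\kmax}$ only needs a color in $[1,2^{\kmax}]$, which holds. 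For properness, consider two adjacent vertices. If one is colored from $A$ and the other from $B$, their colors differ automatically. If both are colored from the same half, they come from adjacent vertices of the same auxiliary graph, so properness of $C^a$ or $C^b$ applies.

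The only real care point is the bookkeeping of the indices: checking that the copy indices and the blocks $[(i-1)p_s+1,ip_s]$ correspond exactly under the shift by $h$. This relies on every $p_s$ with $s\in\mathcal{S}_{\leq\kmax-1}$ dividing $h$, which is where the power-of-two assumption is used. Everything else is direct verification.
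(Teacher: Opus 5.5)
Your proof is correct and follows essentially the same route as the paper. You partition $\mathcal{S}_{\kmax}$ according to which half of $[1,2^{\kmax}]$ contains $C(v_s^1)$, take $C^a$ as a restriction and $C^b$ via the index shift $i\mapsto i+2^{\kmax-1}/p_s$ with colors shifted down by $2^{\kmax-1}$, and glue the two colorings back for the converse; the only addition is that you spell out the block-alignment and properness checks that the paper leaves as ``one can easily check.''
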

\begin{proof}
On one hand, suppose $C:V\left(G_{\mathcal{S}}^\star\right)\rightarrow \left[2^{\kmax}\right]$ is a good $2^{\kmax}$-coloring of $G_{\mathcal{S}}^\star$. Note that $G_{\mathcal{S}}^\star$ has exactly one vertex $v_s^1$ for each $s\in\mathcal{S}_{\kmax}$. According to whether $C(v_s^1)$ is in $[1,2^{\kmax-1}]$ or $[2^{\kmax-1}+1,2^{\kmax}]$, we get a partition $\mathcal{S}_{\kmax}=\mathcal{S}_{\kmax}^a\cup \mathcal{S}_{\kmax}^b$ where 
\[
\mathcal{S}_{\kmax}^a=\left\{s\in \mathcal{S}_{\kmax}\mid C(v_s^1)\in \left[1,2^{\kmax-1}\right] \right\}
\]
and
\[
\mathcal{S}_{\kmax}^b=\left\{s\in \mathcal{S}_{\kmax}\mid C(v_s^1)\in \left[2^{\kmax-1}+1,2^{\kmax}\right] \right\}.
\] 
For $G_{\mathcal{S}^a}^\star$, we define a coloring $C^a:V(G_{\mathcal{S}^a}^\star)\rightarrow[1,2^{\kmax-1}]$ as follows: for each vertex $v_s^i$ in $G_{\mathcal{S}^a}^\star$, let $C^a(v_s^i)=C(v_s^i)$. One can easily check that $C^a$ is good $2^{\kmax-1}$-coloring of $G_{\mathcal{S}^a}^\star$ by definition. 
For $G_{\mathcal{S}^b}^\star$, we define a coloring $C^b:V(G_{\mathcal{S}^b}^\star)\rightarrow[1,2^{\kmax-1}]$ as follows: 
\begin{itemize}
\item for any $s\in\mathcal{S}_{\leq \kmax-1}$ and any $1\leq i\leq 2^{\kmax-1}/p_s$, define $$C^b(v_s^i)=C\left(v_s^{i+(2^{\kmax-1})/p_s}\right)-2^{\kmax-1};$$
Here, we claim that $C^b(v_s^i)$ is nonnegative and thus properly defined. Since $C$ is a good coloring, by definition, we have that for each $s\in\mathcal{S}$ and each $i\in[p_{\mathcal{S}}/p_s]$, $C(v_s^i)\in [(i-1)\cdot p_s+1,i\cdot p_s]$. In particular, 
$$C\left(v_s^{i+(2^{\kmax-1})/p_s}\right)-2^{\kmax-1}\in[(i-1)\cdot p_s+1,i\cdot p_s].$$

\item for any $s\in\mathcal{S}_{\kmax}^b$, define $C^b(v_s^1)=C(v_s^1)-2^{\kmax-1}$.
\end{itemize}
One can easily check that $C^b$ is good $2^{\kmax-1}$-coloring of $G_{\mathcal{S}^b}^\star$ by definition. 

On the other hand, suppose $\mathcal{S}_{\kmax}=\mathcal{S}_{\kmax}^a\cup \mathcal{S}_{\kmax}^b$ is a partition of $\mathcal{S}_{\kmax}$, and $C^a$ and $C^b$ are good $2^{\kmax-1}$-colorings of $G_{\mathcal{S}^a}^\star$ and $G_{\mathcal{S}^b}^\star$ respectively. We define a coloring of $G_{\mathcal{S}}^\star$ as follows:
\begin{itemize}
\item for any $s\in\mathcal{S}_{\leq \kmax-1}$ and any $1\leq i\leq 2^{\kmax}/p_s$, define 
\begin{align*}
 C(v_s^i)=\begin{cases}
 C^a(v_s^i), & \text{if } 1\leq i\leq \frac{2^{\kmax-1}}{p_s};\\
 C^b\left(v_s^{i-\frac{2^{\kmax-1}}{p_s}}\right)+2^{\kmax-1}, & \text{if } \frac{2^{\kmax-1}}{p_s}+1\leq i\leq \frac{2^{\kmax}}{p_s}.
 \end{cases}
 \end{align*}
\item for any $s\in\mathcal{S}_{\kmax}^a$, define $C(v_s^1)=C^a(v_s^1)$;
\item for any $s\in\mathcal{S}_{\kmax}^b$, define $C(v_s^1)=C^b(v_s^1)+2^{\kmax-1}$.
\end{itemize} 
One can check that $C$ is good $2^{\kmax}$-coloring of $G_{\mathcal{S}}^\star$ by definition. 
\end{proof}
For $0\leq k\leq \kmax$ and $\ell\in[n-1]$, define 
\begin{align*}
\clique_{k,\ell}:=&2^{k-\kmax}\cdot \left|\left\{v_s^i\in V(G_{\mathcal{S}}^\star)\mid s\in\mathcal{S}_{\leq k}\mbox{ and } \ell\in \interval_s\right\}\right|\\
=&\sum_{i=0}^k 2^{k-i}\cdot |\{s\in\mathcal{S}_{i}\mid \ell\in \interval_s\}|.
\end{align*}
In particular, $\clique_{\kmax,\ell}=\left|\left\{v_s^i\in V(G_{\mathcal{S}}^\star)\mid \ell\in \interval_s\right\}\right|$.
The following theorem is the main technical part.
\begin{algorithm}[t]
 	\caption{No-wait scheduling algorithm}\label{alg:main}
    \KwIn{a set of streams $\mathcal{S}$.}
    $\kmax\leftarrow \log_2 p_{\mathcal{S}}$;\\
    \For{each $\ell\in[n-1]$}{
         $\clique_{\kmax,\ell}\leftarrow \sum_{i=0}^{\kmax} 2^{\kmax-i}\cdot |\{s\in\mathcal{S}_{i}\mid \ell\in \interval_s\}|$;\\
         \If{$\clique_{\kmax,\ell}>2^{\kmax}$}{
             Return ``no-wait schedules do not exist";
             }  
    }
    Run the procedure $\subalgo(\mathcal{S},\kmax)$.
\end{algorithm}

\begin{algorithm}[h]
\LinesNumbered
 	\caption{Procedure $\subalgo(\mathcal{S},\kmax)$}\label{alg:procedure}
   \KwIn{an integer $\kmax\geq 0$ and a set of streams $\mathcal{S}$.}
    create an empty dictionary $C$\;
    \If{$\mathcal{S}$ is empty}{
           return the dictionary $C$;
           }
    \If{$\kmax=0$}{
         \For{each $s\in\mathcal{S}$}
             {
             $C(v^1_s)\leftarrow 1$, and add the key-value pair $(v^1_s:C(v^1_s))$ to $C$;
             }
      return the dictionary $C$;  
    }
     \For{each $\ell\in[n-1]$}{
         $\clique_{\kmax-1,\ell}\leftarrow \sum_{i=0}^{\kmax-1} 2^{\kmax-1-i}\cdot |\{s\in\mathcal{S}_{i}\mid \ell\in \interval_s\}|$;
    }
    $\vec{c}\leftarrow \left(2^{\kmax-1}-\clique_{\kmax-1,\ell}:\ell\in[n-1]\right)$\;
     create a $(n-1)\times |\mathcal{S}_{\kmax}|$ matrix $A$\;
    \For{each $\ell\in[n-1]$ and each $s\in\mathcal{S}_{\kmax}$}{
         $A[\ell,s]\leftarrow 1_{\ell\in \mathcal{I}_s}$;
    }
    $\vec{x}^\ast\leftarrow $ an integer solution of $\{A(\vec{1}-\vec{x})\leq\vec{c}, A\vec{x}\leq \vec{c},\vec{0}\leq \vec{x}\leq \vec{1}\}$\;
    $\mathcal{S}_{\kmax}^a\leftarrow \{s\in \mathcal{S}_{\kmax}\mid x^\ast_s=0\}$, and $\mathcal{S}_{\kmax}^b\leftarrow \{s\in \mathcal{S}_{\kmax}\mid x^\ast_s=1\}$\;
    compute $\mathcal{S}^a$ and $\mathcal{S}^b$ as defined in \eqref{eq:sa} and \eqref{eq:sb}\; 
    $C^a\leftarrow \subalgo(\mathcal{S}^a,\kmax-1)$ and $C^b\leftarrow \subalgo(\mathcal{S}^b,\kmax-1)$\;
   \For{each $s\in\mathcal{S}_{\leq \kmax-1}$ and each $1\leq i\leq 2^{\kmax}/p_s$}{
             \If{$1\leq i\leq 2^{\kmax}/(2\cdot p_s)$}{
             $C(v_s^i)\leftarrow C^a(v_s^i)$\;}
         \Else{$C(v_s^i)\leftarrow C^b\left(v_s^{i-2^{\kmax}/(2\cdot p_s)}\right)+2^{\kmax-1}$\;}
         add the key-value pair $(v_s^i,C(v_s^i))$ to dictionary $C$;
     }
    \For{each $s\in\mathcal{S}_{\kmax}^a$}{
         $C(v_s^1)\leftarrow C^a(v_s^1)$, and add the key-value pair $(v_s^1,C(v_s^1))$ to dictionary $C$;}
    \For{each $s\in\mathcal{S}_{\kmax}^b$}{
         $C(v_s^1)\leftarrow C^b(v_s^1)+2^{\kmax-1}$, and add $(v_s^1,C(v_s^1))$ to dictionary $C$;
         }
    return dictionary $C$ as a good $2^{\kmax}$-coloring of $G_{\mathcal{S}}^\star$;
\end{algorithm}

\begin{theorem}\label{thm:main}
$G_{\mathcal{S}}^\star$ admits a good $p_{\mathcal{S}}$-coloring if and only if $\clique_{\kmax,\ell}\leq 2^{\kmax}$ for any $\ell\in[n-1]$.
\end{theorem}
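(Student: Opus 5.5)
Necessity is a counting argument. Fix $\ell\in[n-1]$. All vertices $v_s^i$ with $\ell\in\interval_s$ are pairwise adjacent in $G_{\mathcal{S}}^\star$, so any proper coloring gives them distinct colors. There are exactly $\clique_{\kmax,\ell}$ such vertices and only $2^{\kmax}=p_{\mathcal{S}}$ colors, so $\clique_{\kmax,\ell}\leq 2^{\kmax}$.

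For sufficiency I induct on $\kmax$, using Lemma \ref{lem:characterization1}.

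\emph{Base case $\kmax=0$.} All periods equal $1$, and the hypothesis $\clique_{0,\ell}\le 1$ means no two intervals share a port. Hence $G_{\mathcal{S}}^\star$ has no edges, and coloring every vertex with $1$ is good.

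\emph{Inductive step $\kmax\ge 1$.} I look for a partition $\mathcal{S}_{\kmax}=\mathcal{S}^a_{\kmax}\cup\mathcal{S}^b_{\kmax}$ such that $\mathcal{S}^a$ and $\mathcal{S}^b$ both satisfy the hypothesis at level $\kmax-1$. Then the inductive hypothesis gives good $2^{\kmax-1}$-colorings of both, and Lemma \ref{lem:characterization1} assembles them. For $\mathcal{S}^a$ the relevant load at port $\ell$ is
\[
\clique_{\kmax-1,\ell}+|\{s\in\mathcal{S}^a_{\kmax}:\ell\in\interval_s\}|,
\]
where $\clique_{\kmax-1,\ell}$ is the quantity computed for the original $\mathcal{S}$; the same holds for $\mathcal{S}^b$. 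Set $c_\ell:=2^{\kmax-1}-\clique_{\kmax-1,\ell}$ and $m_\ell:=|\{s\in\mathcal{S}_{\kmax}:\ell\in\interval_s\}|$. From the definitions, $\clique_{\kmax,\ell}=2\clique_{\kmax-1,\ell}+m_\ell$, so the hypothesis reads $m_\ell\le 2c_\ell$. We need $\vec x\in\{0,1\}^{\mathcal{S}_{\kmax}}$ with $A\vec x\le\vec c$ and $A(\vec 1-\vec x)\le\vec c$, i.e.
\[
m_\ell-c_\ell\le (A\vec x)_\ell\le c_\ell \quad\text{for all }\ell,
\]
which is exactly the system solved in Algorithm \ref{alg:procedure}.

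\emph{Finding the partition.} This is the main step. Since $m_\ell\le 2c_\ell$, the window at each $\ell$ contains $\lceil m_\ell/2\rceil$ and $\lfloor m_\ell/2\rfloor$, so it suffices to split the intervals of $\mathcal{S}_{\kmax}$ so that every port sees a balanced split, $|(A\vec x)_\ell-m_\ell/2|\le 1/2$. This is the classical balanced 2-coloring of intervals, a consequence of the total unimodularity of interval matrices (the consecutive-ones property, or Ghouila-Houri's characterization). One can also prove it directly: sort the intervals of $\mathcal{S}_{\kmax}$ by left endpoint and assign them alternately, or use an Euler-tour pairing of endpoints. Concretely, add a dummy endpoint if needed, pair consecutive endpoints along the line, and color so that each point is crossed by nearly equally many intervals of each color.

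The cleanest route I would take is to invoke total unimodularity. $\vec x=\tfrac12\vec 1$ is a fractional feasible point, since $m_\ell/2\le c_\ell$. The constraint matrix $[A;-A;I;-I]$ is totally unimodular because $A$ has the consecutive-ones property in its columns, and the right-hand sides are integral. Hence the polytope is nonempty with integral vertices, so an integral solution $\vec x^\ast$ exists. This also justifies the LP step in Algorithm \ref{alg:procedure}, and the induction closes.
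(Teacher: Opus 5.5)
Your proof is correct. For sufficiency it follows the paper's argument. You induct on $\kmax$, use the ``if'' direction of Lemma~\ref{lem:characterization1}, and reduce to finding $\vec x\in\{0,1\}^{\mathcal{S}_{\kmax}}$ with $A\vec x\le\vec c$ and $A(\vec 1-\vec x)\le\vec c$. You then note that $\frac12\vec 1$ is feasible since $m_\ell\le 2c_\ell$, and round using total unimodularity of the consecutive-ones matrix $A$. Here you are slightly more explicit than Claim~\ref{claim2}, since you note that $[A;-A;I;-I]$ is totally unimodular and the right-hand sides are integral.

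The difference is in necessity. The paper gets it by running the same chain of equivalences backwards:
\begin{itemize}
\item a good coloring splits $\mathcal{S}_{\kmax}$ via Lemma~\ref{lem:characterization1};
\item the inductive hypothesis turns the split into an integral, hence fractional, feasible point;
\item the symmetrization $\vec x^\ast\mapsto\vec 1-\vec x^\ast$ in Claim~\ref{claim3} then forces $\frac12\vec 1$ to be feasible.
\end{itemize}
You instead read necessity off directly: the replicas whose intervals contain $\ell$ form a clique of size $\clique_{\kmax,\ell}$. The paper only records this fact in the Remark after the theorem. Your route is shorter, and the induction only has to carry the ``if'' direction. The paper's route packages both directions into one chain. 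That chain also certifies that the LP in Line~14 of Algorithm~\ref{alg:procedure} is feasible exactly when the test in Algorithm~\ref{alg:main} passes.

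One caution about your aside. Sorting intervals by left endpoint and alternating does \emph{not} give a balanced split in general. For the intervals $[1,3]$, $[2,2]$, $[3,3]$ it puts $[1,3]$ and $[3,3]$ on the same side, so port $3$ sees a $2$--$0$ split. It is good that your actual argument rests on total unimodularity rather than on that heuristic.
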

\begin{proof}
The proof is by induction on $\kmax$. When $\kmax=0$, the theorem is trivial. By induction, we assume the theorem holds when $\kmax=k'$, and we are going to show the theorem when $\kmax=k'+1$.

According to Lemma \ref{lem:characterization1}, $G_{\mathcal{S}}^\star$ admits a good $2^{\kmax}$-coloring if and only if there is a partition $\mathcal{S}_{\kmax}=\mathcal{S}_{\kmax}^a\cup \mathcal{S}_{\kmax}^b$ such that both $G_{\mathcal{S}^a}^\star$ and $G_{\mathcal{S}^b}^\star$ admit good $\big(2^{\kmax-1}\big)$-colorings. 
Remember that the hyperperiods of $\mathcal{S}^a$ and $\mathcal{S}^b$ are both $2^{\kmax-1}$. So by the induction hypothesis, $G_{\mathcal{S}^a}^\star$ admit a good $\big(2^{\kmax-1}\big)$-coloring if and only if 
\begin{align}\label{eq:s_a}
c_{\kmax-1,\ell}+|\{s\in\mathcal{S}_{\kmax}^a\mid \ell\in \interval_s\}|\leq 2^{\kmax-1}\mbox{ for any } \ell\in[n-1].
\end{align}
Similarly, by the induction hypothesis, $G_{\mathcal{S}^b}^\star$ admit a good $\big(2^{\kmax-1}\big)$-coloring if and only if
\begin{align}\label{eq:s_b}
c_{\kmax-1,\ell}+|\{s\in\mathcal{S}_{\kmax}^b\mid \ell\in \interval_s\}|\leq 2^{\kmax-1}\mbox{ for any } \ell\in[n-1].
\end{align}

Define a Boolean vector $\vec{x}=(x_s:s\in\mathcal{S}_{\kmax})$ where $x_s=0$ indicates $s\in \mathcal{S}_{\kmax}^a$ and $x_s=1$ indicates $s\in \mathcal{S}_{\kmax}^b$. Define a $(n-1)\times |S_{\kmax}|$ Boolean matrix $A$ where $A[\ell,s]=1_{\ell\in\mathcal{I}_s}$. That is, $A[\ell,s]=1$ if $\ell\in \mathcal{I}_s$ and $A[\ell,s]=0$ otherwise. Note that 
\[
A(\vec{1}-\vec{x})=\left(|\{s\in\mathcal{S}_{\kmax}^a\mid \ell\in \interval_s\}|:\ell\in[n-1]\right)
\]
and
\[
A\vec{x}=\left(|\{s\in\mathcal{S}_{\kmax}^b\mid \ell\in \interval_s\}|:\ell\in[n-1]\right),
\]
where $\vec{1}$ is the all ones vector. Define a vector 
\[
\vec{c}=(2^{\kmax-1}-c_{\kmax-1,\ell}:\ell\in[n-1]).
\]
 Then \eqref{eq:s_a} and \eqref{eq:s_b} can be rewritten as $A(\vec{1}-\vec{x})\leq \vec{c}$ and $A\vec{x}\leq \vec{c}$ respectively. So we have the following claim.
	\begin{claim}\label{claim1}
$G_{\mathcal{S}}^\star$ admits a good $p_{\mathcal{S}}$-coloring if and only if $\{A(\vec{1}-\vec{x})\leq\vec{c}, A\vec{x}\leq \vec{c},\vec{0}\leq \vec{x}\leq \vec{1},\vec{x}\in\mathbb{Z}\}\neq\emptyset$.
	\end{claim}
Noting that $A$ is a consecutive-ones matrix, i.e., the ones appearing consecutively in each column, we have that $A$ is a totally unimodular matrix \cite{fulkerson1965incidence}. So we get the following claim. 
\begin{claim}\label{claim2}
$\{A(\vec{1}-\vec{x})\leq\vec{c}, A\vec{x}\leq \vec{c},\vec{0}\leq \vec{x}\leq \vec{1},\vec{x}\in\mathbb{Z}\}\neq\emptyset$ if and only if $\{A(\vec{1}-\vec{x})\leq\vec{c}, A\vec{x}\leq \vec{c},\vec{0}\leq \vec{x}\leq \vec{1}\}\neq\emptyset$.
\end{claim}
The following claim is crucial.
\begin{claim}\label{claim3}
$\{A(\vec{1}-\vec{x})\leq\vec{c}, A\vec{x}\leq \vec{c},\vec{0}\leq \vec{x}\leq \vec{1}\}\neq\emptyset$ if and only if  $A\cdot \vec{1}\leq 2\vec{c}$.
\end{claim}
\begin{proof}
For convenience of presentation, let $\mathcal{F}:=\{A(\vec{1}-\vec{x})\leq\vec{c}, A\vec{x}\leq \vec{c},\vec{0}\leq \vec{x}\leq \vec{1}\}$.
On one direction, if $A\cdot \vec{1}\leq 2\vec{c}$, then $\frac{1}{2}\cdot\vec{1}\in \mathcal{F}$, so $\mathcal{F}\neq\emptyset$. On the other direction, suppose $\vec{x}^\ast\in \mathcal{F}$. Then $(\vec{1}-\vec{x}^\ast)\in \mathcal{F}$. Because $\mathcal{F}$ is a convex set, $\vec{x}^\ast/2+(\vec{1}-\vec{x}^\ast)/2=\frac{1}{2}\cdot \vec{1}\in \mathcal{F}$, which implies that $A\cdot \vec{1}\leq 2\vec{c}$.
\end{proof}
Finally, $A\cdot \vec{1}\leq 2\vec{c}$ means that 
\begin{align*}
|\{s\in\mathcal{S}_{\kmax}\mid \ell\in \interval_s\}|\leq 2^{\kmax}-2c_{\kmax-1,\ell}\mbox{ for any } \ell\in[n-1],
\end{align*}
which is equivalent to saying: for any $\ell\in[n-1]$, 
\begin{align*}
c_{\kmax,\ell}=2c_{\kmax-1,\ell}+|\{s\in\mathcal{S}_{\kmax}\mid \ell\in \interval_s\}|\leq 2^{\kmax} 
\end{align*}
The conclusion follows from Claims \ref{claim1}, \ref{claim2}, and \ref{claim3} immediately.
\end{proof}
\begin{remark}
It is a basic fact that an interval graph admits a proper $k$-coloring if and only if it contains no clique of size greater than $k$. Thus $G_{\mathcal{S}}^\star$ admits a proper $2^{\kmax}$-coloring if and only if $\clique_{\kmax,\ell}\leq 2^{\kmax}$ for any $\ell\in[n-1]$. So $G_{\mathcal{S}}^\star$ admits a proper $2^{\kmax}$-coloring if and only if it admits a good $2^{\kmax}$-coloring. 
\end{remark}
\begin{figure*}[t]
	\centering
	\includegraphics[scale=0.35]{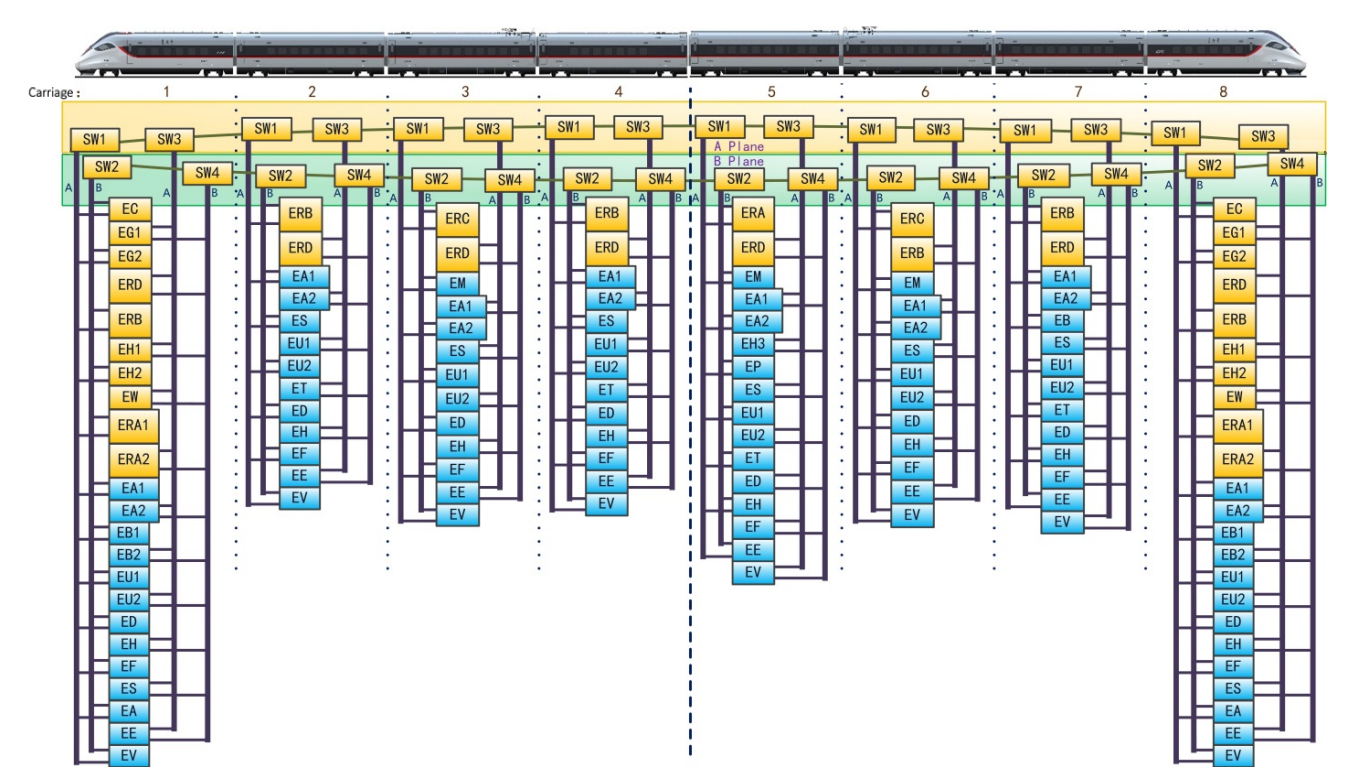}
	\caption{The topology of the communication network on a metro train}
	\label{fig:train}
\end{figure*}
 
Note that all $\clique_{\kmax,\ell}=\sum_{i=0}^{\kmax} 2^{\kmax-i}\cdot |\{s\in\mathcal{S}_{i}\mid \ell\in \interval_s\}|$ can be computed in $O(|\mathcal{S}|\cdot n)$ time. By Theorem \ref{thm:main} and \ref{thm:equiv}, we have 
\begin{corollary}\label{cor:decide}
It can be decided in $O(|\mathcal{S}|\cdot n)$ time whether there exists a no-wait schedule for $\mathcal{S}$. 
\end{corollary}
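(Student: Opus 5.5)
The corollary follows by combining Theorem \ref{thm:equiv} with Theorem \ref{thm:main}; only a running-time accounting is needed. By Theorem \ref{thm:equiv}, a no-wait schedule for $\mathcal{S}$ exists if and only if $G_{\mathcal{S}}^\star$ admits a good $p_{\mathcal{S}}$-coloring. By Theorem \ref{thm:main}, this holds if and only if $\clique_{\kmax,\ell}\leq 2^{\kmax}$ for every $\ell\in[n-1]$. The decision procedure is therefore the first loop of Algorithm \ref{alg:main}: compute all $n-1$ values $\clique_{\kmax,\ell}$, compare each with $2^{\kmax}$, and answer ``yes'' exactly when no comparison fails. We never build $G^\star_{\mathcal{S}}$ or run $\subalgo$.

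For the time bound, we first scan $\mathcal{S}$ once to find $\kmax=\max_s\log_2 p_s$, in $O(|\mathcal{S}|)$ time. We initialize an array of $n-1$ counters to zero. For each stream $s\in\mathcal{S}$ with $p_s=2^{i}$, we add the weight $2^{\kmax-i}=p_{\mathcal{S}}/p_s$ to every counter $\ell\in\interval_s=[a_s,b_s-1]$. Since $|\interval_s|\leq n-1$, this costs $O(n)$ per stream and $O(|\mathcal{S}|\cdot n)$ in total. Afterwards each counter equals $\sum_{i=0}^{\kmax}2^{\kmax-i}\cdot|\{s\in\mathcal{S}_i\mid \ell\in\interval_s\}|=\clique_{\kmax,\ell}$, as defined. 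The final comparisons take $O(n)$ time. Finally, splitting $\mathcal{S}$ into the left-to-right and right-to-left groups, which use disjoint egress ports, is a linear-time preprocessing step, and applying the same test to each group keeps the bound $O(|\mathcal{S}|\cdot n)$.

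The only point needing care is the cost model: the weights $2^{\kmax-i}$ and the counters can be as large as $|\mathcal{S}|\cdot p_{\mathcal{S}}$. We adopt the standard assumption that arithmetic on numbers of this size (with $O(\log p_{\mathcal{S}}+\log|\mathcal{S}|)$ bits) takes unit time, just as the paper does implicitly. If one wished to avoid this, a difference-array (prefix-sum) update would even reduce the accumulation to $O(|\mathcal{S}|+n)$ operations, but the stated bound does not require it.
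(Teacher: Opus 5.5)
Your proposal is correct and follows the paper's argument: Theorem \ref{thm:equiv} and Theorem \ref{thm:main} together reduce the question to checking $\clique_{\kmax,\ell}\leq 2^{\kmax}$ for every $\ell\in[n-1]$, and these $n-1$ weighted counts can be accumulated in $O(|\mathcal{S}|\cdot n)$ time. Your remarks on the unit-cost arithmetic assumption and on a difference-array update that brings the count down to $O(|\mathcal{S}|+n)$ operations are valid refinements that the paper does not mention.
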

Furthermore, if no-wait schedule exists for $\mathcal{S}$, then the procedure $\subalgo(\mathcal{S},\kmax)$, described in Algorithm \ref{alg:procedure}, would return a good $2^{\kmax}$-coloring of $G_{\mathcal{S}}^\star$, represented in the form of dictionary $\{(v_s^i,C(v_s^i))\}$. Given a good $2^{\kmax}$-coloring $C:V\left(G_{\mathcal{S}}^\star\right)\rightarrow \left[2^{\kmax}\right]$ of $G_{\mathcal{S}}^\star$, we can easily get a no-wait schedule by placing the $i$-th replication of $s$ in the $C(v_s^i)$-th layer of the parallelogram, i.e., by setting the injection time of the $i$-th replication of $s$ to be $C(v_s^i)+a_s-1$.

\begin{theorem}\label{thm:summary}
Algorithm \ref{alg:main} solves the no-wait scheduling problem in $O((n+|\mathcal{S}|)\cdot |\mathcal{S}|^{1.5}\log_2 p_{\mathcal{S}}+|\mathcal{S}|\cdot p_{\mathcal{S}}\log_2 p_{\mathcal{S}})$ time.
\end{theorem}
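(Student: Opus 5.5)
We split the argument into correctness and running time.

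\textbf{Correctness.} By Theorem \ref{thm:main} together with Theorem \ref{thm:equiv}, the check performed in the loop of Algorithm \ref{alg:main} is exactly the criterion for existence of a no-wait schedule. So if Algorithm \ref{alg:main} reports that no schedule exists, it is right. Otherwise we show by induction on $\kmax$ that $\subalgo(\mathcal{S},\kmax)$ returns a good $2^{\kmax}$-coloring of $G_{\mathcal{S}}^\star$ whenever $\clique_{\kmax,\ell}\leq 2^{\kmax}$ for all $\ell$. The cases of empty $\mathcal{S}$ and $\kmax=0$ are immediate, since $\clique_{0,\ell}\le 1$ means all intervals are pairwise disjoint. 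For the inductive step, Claim \ref{claim3} shows that the polytope $\{A(\vec 1-\vec x)\le\vec c,\ A\vec x\le\vec c,\ \vec 0\le\vec x\le\vec 1\}$ is nonempty, and Claim \ref{claim2} (total unimodularity) shows that an integer point $\vec x^\ast$ exists. The resulting partition satisfies \eqref{eq:s_a} and \eqref{eq:s_b}, so $\mathcal{S}^a$ and $\mathcal{S}^b$ meet the hypothesis of Theorem \ref{thm:main} at level $\kmax-1$. By induction the recursive calls return good $2^{\kmax-1}$-colorings $C^a$ and $C^b$. The combination step in Algorithm \ref{alg:procedure} is verbatim the ``if'' direction of Lemma \ref{lem:characterization1}, so $C$ is good. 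Finally, a schedule is read off by setting the injection time of the $i$-th replication of $s$ to $C(v_s^i)+a_s-1$.

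\textbf{Running time.} The recursion tree has depth $\kmax=\log_2 p_{\mathcal{S}}$, with $2^{j}$ calls at depth $j$. Each stream of $\mathcal{S}_{\kmax}$ is routed into exactly one child. Each stream of period $2^k$ therefore appears in every call at depths $j\le \kmax-k$ along paths consistent with its replications, so it appears in $2^{\kmax-k}$ calls at the depth where it is split. The total number of stream-occurrences per depth is at most $|\mathcal{S}|$ at the top levels, but it grows with replications. The cleaner accounting charges the work of a call with stream set $\mathcal{S}'$ in two parts:
\begin{itemize}
\item[(a)] computing $\vec c$ and $A$ in $O(n|\mathcal{S}'|)$ time, plus solving the integer feasibility problem;
\item[(b)] merging the dictionaries, whose cost is proportional to the number of vertices $v_s^i$ written.
\end{itemize}
Summed over one depth level, the dictionary work in (b) is $O(\sum_s p_{\mathcal{S}}/p_s)=O(|\mathcal{S}|\,p_{\mathcal{S}})$. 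Over $\log_2 p_{\mathcal{S}}$ levels this gives the $|\mathcal{S}|\, p_{\mathcal{S}}\log_2 p_{\mathcal{S}}$ term.

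\textbf{Main obstacle: the integer feasibility step.} We need an integer point of a system with a TU (consecutive-ones) matrix, in time $O((n+m)m^{1.5})$ where $m=|\mathcal{S}'|$. The plan is to transform the constraints into a flow problem. Because $A$ has consecutive ones per column, taking differences of consecutive rows (the standard interval-matrix-to-network-matrix transformation) turns $A\vec x\le\vec c$ and $A\vec x\ge A\vec 1-\vec c$ into a circulation with lower and upper bounds on a path-plus-arcs network. That network has $n$ nodes, with one arc per stream of $\mathcal{S}_{\kmax}$ carrying capacity $[0,1]$. Feasibility with unit capacities is a bipartite-matching-like $b$-flow, which a Hopcroft–Karp/Dinic-type algorithm solves in $O((n+m)\sqrt m)$ phases-times-edges, i.e.\ within $O((n+m)m^{1.5})$. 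Integrality of the flow yields $\vec x^\ast$.

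It remains to bound the call sizes. The streams handled at a given depth sum to at most $|\mathcal{S}|$ per call, and there are $\log_2 p_{\mathcal{S}}$ depths. Arguing that the per-level cost sums to $O((n+|\mathcal{S}|)|\mathcal{S}|^{1.5})$ uses convexity of $m\mapsto m^{1.5}$ together with the fact that the sets $\mathcal{S}_{\kmax}$ split at one level across different calls are disjoint subsets of original streams. Adding the $O(|\mathcal{S}|n)$ preprocessing of Algorithm \ref{alg:main} gives the claimed bound. I expect the delicate points to be the reduction of the TU system to a unit-capacity flow with the stated complexity, and the bookkeeping showing that replications do not inflate the per-level sizes of the sets fed to the flow solver.
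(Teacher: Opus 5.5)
Your correctness argument is sound and is essentially the paper's, written out as an explicit induction. The gap is in the running-time bookkeeping, at exactly the point you flagged. The assertion that the sets split at one depth ``are disjoint subsets of original streams'' is false. $\subalgo$ copies all of $\mathcal{S}_{\leq \kmax-1}$ into \emph{both} $\mathcal{S}^a$ and $\mathcal{S}^b$. So, as you yourself observed earlier, an original stream of period $2^k$ is present in all $2^{\kmax-k}$ calls at depth $\kmax-k$ and enters the LP/flow step in each of them, and then in $2^{\kmax-k}$ calls at every deeper level. Hence at depth $j$ the number of nonempty calls can be $2^j$, and the total input can be of order $2^j|\mathcal{S}|$ rather than $|\mathcal{S}|$. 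Every nonempty non-leaf call also pays $\Omega(n)$ in your item (a) just to form $\vec{c}$ and the $(n-1)$-row matrix $A$. Convexity of $m\mapsto m^{1.5}$ cannot absorb this, so the per-level bound $O((n+|\mathcal{S}|)|\mathcal{S}|^{1.5})$ does not follow.

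In fact no bookkeeping can close the gap for the algorithm as written. Take $s_1$ with $\interval_{s_1}=[1,1]$, $p_{s_1}=2$, and $s_2$ with $\interval_{s_2}=[2,2]$, $p_{s_2}=2^K$, on a chain with $n-1=2^K$ ports. This instance is feasible. Moreover, $s_1$ belongs to every one of the $2^K-1$ calls at depths $0,\dots,K-1$, and each of these runs the loop over $\ell\in[n-1]$. So the running time is $\Omega(n\,2^K)=\Omega(4^K)$, while the claimed bound is $O(K\,2^K)$.

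The paper's own proof has the same problem in another guise. Its recurrence $T(|\mathcal{S}|,\kmax)\leq 2T(|\mathcal{S}|,\kmax-1)+O((n+|\mathcal{S}|)|\mathcal{S}|^{1.5}+|\mathcal{S}|p_{\mathcal{S}})$ unrolls to $O(p_{\mathcal{S}}(n+|\mathcal{S}|)|\mathcal{S}|^{1.5}+|\mathcal{S}|p_{\mathcal{S}}\log_2 p_{\mathcal{S}})$, not to a $\log_2 p_{\mathcal{S}}$ factor on the first term. That weaker, still polynomial, bound is what either accounting legitimately gives. The stated bound would need a modified procedure, not just a sharper analysis: for example, working only on the ports touched by the current call's streams and replacing the LP by a linear-time balanced split.

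Separately, your flow network is not unit-capacity, since the path arcs carry bounds up to $2^{\kmax-1}$. So the Hopcroft--Karp-type estimate is unjustified as stated, although any integral circulation algorithm suffices for correctness.
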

\begin{proof}
We first show the correctness of Algorithm \ref{alg:main}. By Theorem \ref{thm:main} and \ref{thm:equiv}, there exist no-wait schedules for $\mathcal{S}$ if and only if $\clique_{\kmax,\ell}\leq 2^{\kmax}$ for any $\ell\in[n-1]$. Moreover, from the proof of Theorem \ref{thm:main},  one can easily verify that the procedure $\subalgo(\mathcal{S},\kmax)$ would return a no-wait schedule if one exists. So Algorithm \ref{alg:main} returns a no-wait schedule if one exists or returns ``no-wait schedules do not exist" otherwise.

What remains is to analyze the time complexity of Algorithm \ref{alg:main}. It executes $O(|\mathcal{S}|\cdot n)$ time to execute Lines 1-5. In the following, we will show that it costs $O((n+|\mathcal{S}|)\cdot |\mathcal{S}|^{1.5}\log_2 p_{\mathcal{S}}+|\mathcal{S}|\cdot p_{\mathcal{S}}\log_2 p_{\mathcal{S}})$ time to run the procedure $\subalgo(\mathcal{S},\kmax)$, and then finish the proof. 

Let $T(|\mathcal{S}|,\kmax)$ denote the time complexity of  $\subalgo(\mathcal{S},\kmax)$. First, it is easy to see that it costs $O(n\cdot |\mathcal{S}|)$ time to execute Lines 1-13, 15-16, and 24-28. It costs $O((n+|\mathcal{S}|)\cdot |\mathcal{S}|^{1.5})$ time to solve the linear programming in Line 14 \cite{vaidya1989speeding}, and costs $O(|\mathcal{S}|\cdot 2^{\kmax})=O(|\mathcal{S}|\cdot p_{\mathcal{S}})$ time to execute Lines 18-23. In addition, it costs at most $2T(|\mathcal{S}|,\kmax-1)$ time to execute Line 17. So we have 
\[
T(|\mathcal{S}|,\kmax)\leq 2T(|\mathcal{S}|,\kmax-1)+O((n+|\mathcal{S}|)\cdot |\mathcal{S}|^{1.5}+|\mathcal{S}|\cdot p_{\mathcal{S}}).
\]
By solving the above recursion, we have 
\[
T(|\mathcal{S}|,\kmax)=O((n+|\mathcal{S}|)\cdot |\mathcal{S}|^{1.5}\log_2 p_{\mathcal{S}}+|\mathcal{S}|\cdot p_{\mathcal{S}}\log_2 p_{\mathcal{S}}).\qedhere
\]
\end{proof}

\section{Evaluation}\label{sec:evaluation}
We present the evaluations of Algorithm \ref{alg:main} for 
time-sensitive networks to demonstrate optimality and high scalability. 

\vspace{1ex}
\noindent\textbf{Qualitative Evaluations.}  The qualitative evaluations are conducted on a real-life TSN system equipped on a metro train, where sensors across the train collect the status data of the train and then send data to controllers, which in turn send commands to actuators to keep the state of
the train close to the desired setpoint. The topology is described in Figure \ref{fig:train}. Here, the controller is located in Carriage 1, and most of the streams go from the other carriages to Carriage 1 or in the reverse direction. To verify the optimality of Algorithm \ref{alg:main}, we compared the schedules computed using Algorithm \ref{alg:main} to the ones generated by an SAT Modulo Theory (SMT) solver which exactly solves the SMT formulation of the corresponding problem \cite{survey}. Note that the SMT solver is poorly scalable and can only solve small instances with up to about 300 flows in 1 hour. 

We executed our evaluations in 10 practical scenarios on the topology in Figure \ref{fig:train} with respectively 3, 8, 40, 128, and 256 flows. We configured the SMT solver with a time limit of 1 hour for solving each scenario. Then we also run Algorithm \ref{alg:main} on the same instance. The evaluation shows that the SMT solver returns a no-wait scheduling if and only if Algorithm \ref{alg:main} returns one.
Besides, the average execution time of the SMT solver on these instances is about 3 minutes,  while Algorithm \ref{alg:main} solves these instances in 2 seconds on average.

\vspace{1ex}
\noindent\textbf{Scalability Evaluations.} 
To determine the scalability of Algorithm \ref{alg:main}, we measure the averaged execution times of Algorithm \ref{alg:main} with up to tens of thousands of flows on a personal computer (multi-processor machine (Intel(R) Core(TM) CPU i5-
12500H @ 2.50GHz) with 4P + 8E cores and 16 GB of memory). Precisely, we run Algorithm \ref{alg:main} on randomized instances with varying numbers of flows on the daisy-chain topology with varying lengths. The evaluation results show that Algorithm \ref{alg:main} can compute schedules for about 45,000 flows in only about 30 minutes. 

\section{Conclusion}\label{sec:conclusion}

In this paper, we propose an efficient algorithm to compute no-wait schedules on the daisy-chain topology, with a time complexity that scales polynomially in both the number of streams and the network size. The basic idea is to view the scheduling problem as a variant of graph coloring problem on the interval graphs. Our algorithm is proven to be optimal, and evaluations on real-life TSN systems justify the high scalability.  In our future work, we plan to investigate other commonly used topologies, such as the star topology, cycle topology, and tree topology.

\bibliography{ref}

\end{document}